\newtheorem{lemma}{Lemma}
\newtheorem{theorem}{Theorem}
\theoremstyle{definition}
\newtheorem{definition}{Definition}
\theoremstyle{remark}
\newtheorem*{remark}{Remark}
\DeclareMathOperator{\tr}{tr}
\begin{document}

\title{Locally accurate matrix product approximation to thermal states}

\begin{CJK}{UTF8}{gbsn}

\author{Yichen Huang (黄溢辰)\thanks{yichuang@mit.edu}}
\affil{Center for Theoretical Physics, Massachusetts Institute of Technology, Cambridge, Massachusetts 02139, USA}

\maketitle

\end{CJK}

\begin{abstract}

In one-dimensional quantum systems with short-range interactions, a set of leading numerical methods is based on matrix product states, whose bond dimension determines the amount of computational resources required by these methods. We prove that a thermal state at constant inverse temperature $\beta$ has a matrix product representation with bond dimension $e^{\tilde O(\sqrt{\beta\log(1/\epsilon)})}$ such that all local properties are approximated to accuracy $\epsilon$. This justifies the common practice of using a constant bond dimension in the numerical simulation of thermal properties.

\end{abstract}

\section{Introduction}

Classical simulation of quantum many-body systems is a fundamental problem in computational physics. One difficulty is that a generic many-body state cannot even be represented in polynomial space because the dimension of the Hilbert space grows exponentially with the system size. Fortunately, many physically interesting states are non-generic and structured. It may be possible to avoid the ``curse of dimensionality'' by exploiting their structure.

In one-dimensional quantum systems such as spin chains, matrix product state (MPS) methods, including the celebrated density matrix renormalization group, are popular and numerically powerful \cite{Sch11}. As the name suggests, an MPS is a data structure representing a many-body state by products of matrices. The dimension of the matrices is called the bond dimension, which determines the space complexity or the number of parameters in the MPS.

Consider a system in thermal equilibrium. In textbooks and courses, we learned that the system at temperature $T=1/\beta$ is in a mixed state described by the density operator
\begin{equation} \label{eq:th}
\sigma_\beta:=e^{-\beta H}/Z,\quad Z:=\tr e^{-\beta H},  
\end{equation}
where $H$ is the Hamiltonian and $Z$ is the partition function.

Since MPS methods are widely used, it is important to understand their empirical success in simulating the thermal properties of one-dimensional systems with short-range interactions. In a remarkable sequence of papers \cite{Has06, KGK+14, MSVC15, KAA21}, $\sigma_\beta$ is proved to be efficiently approximated by a matrix product operator (MPO) \cite{VGC04, ZV04, GMSC20}, which is a straightforward generalization of MPS to operators. Let $\tilde O(x):=O(x\log x)$. The best known result is
\begin{theorem} [\cite{KAA21}] \label{t:sota}
Consider a chain of $N$ spins governed by a local Hamiltonian $H$. There exists an MPO $\varrho$ with bond dimension
\begin{equation} \label{eq:bd}
    e^{\tilde O\left(\beta^{2/3}+\sqrt{\beta\log\frac{N}\varepsilon}\right)}
\end{equation}
such that $\|\varrho-\sigma_\beta\|_1\le\varepsilon$, where $\|X\|_1:=\tr\sqrt{X^\dag X}$ is the trace norm.
\end{theorem}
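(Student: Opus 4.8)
The plan is to bound the \emph{operator Schmidt rank} of $\sigma_\beta$ across every cut of the chain, since a matrix product operator of bond dimension $D$ exists exactly when each of the $N-1$ cuts admits an operator Schmidt decomposition of rank at most $D$. First I would reduce the global problem to a single cut: if across each cut $e^{-\beta H}$ can be approximated in trace norm, to error $\varepsilon':=\varepsilon/N$, by an operator of Schmidt rank at most $D$, then performing these truncations sequentially cut by cut (each one a truncated operator Schmidt decomposition of the current approximant) yields a genuine MPO whose errors telescope to at most $(N-1)\varepsilon'\le\varepsilon$. This bookkeeping step is the source of the $\log(N/\varepsilon)$ that appears inside the exponent.

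The heart of the matter is therefore a single-cut estimate. Fix a cut splitting the chain into a left part $L$ and a right part $R$, and write $H=H_L+H_R+h$, where $h$ collects the $O(1)$ boundary terms straddling the cut and $e^{-\beta(H_L+H_R)}=e^{-\beta H_L}\otimes e^{-\beta H_R}$ has Schmidt rank one. I would pass to the interaction picture with respect to $H_L+H_R$, writing $e^{-\beta H}=e^{-\beta(H_L+H_R)}\,\mathcal T\exp\bigl(-\int_0^\beta \tilde h(s)\,ds\bigr)$ with $\tilde h(s):=e^{s(H_L+H_R)}\,h\,e^{-s(H_L+H_R)}$, and expand the ordered exponential as a Dyson series. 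Because $\tilde h(s)$ stays quasi-local under this (imaginary-time) evolution, truncating the series and discarding the tails of $\tilde h(s)$ outside a window of radius $\ell$ about the cut replaces $e^{-\beta H}$ by an operator that is nontrivial across the cut only inside that window; its operator Schmidt rank is then at most $(\text{local dimension})^{O(\ell)}=e^{O(\ell)}$ by counting the local operators supported there.

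The quantitative crux is the trade-off between the window radius (equivalently the truncation order) and the approximation error. A crude norm bound on the Dyson tail scales like $(\beta\|h\|)^k/k!$ and forces a window linear in $\beta$; to reach the advertised $\sqrt{\beta\log(1/\varepsilon')}$ one must instead exploit that $e^{-\beta x}$ admits a Chebyshev polynomial approximation of degree only $\tilde O\bigl(\sqrt{\beta\log(1/\varepsilon')}\bigr)$. A degree-$d$ polynomial in $H=\sum_i h_i$ spreads each site across the cut by at most $d$, so it has operator Schmidt rank $e^{O(d)}$; combining this polynomial control with the interaction-picture localization, and feeding in a sharpened thermal area law to handle the extensive, non-perturbative part of the Gibbs weight, I expect to produce the two summands $\beta^{2/3}$ and $\sqrt{\beta\log(N/\varepsilon)}$ in the exponent.

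The hard part will be the extensivity of $H$. A direct global Chebyshev approximation would require rescaling $H$ by its operator norm $\Theta(N)$, inflating the effective inverse temperature to $\beta N$ and destroying the bound, so the polynomial and quasi-locality estimates must be applied only to the boundary-crossing content while arguing that the bulk factorizes. The delicate ingredient is a complex-time quasi-locality (Lieb--Robinson-type) bound strong enough that the truncation error is controlled in \emph{trace norm} rather than merely for a single fixed local observable, and strong enough to push the $\beta$-exponent down to $2/3$ rather than the $1$ that the crude Dyson estimate gives; this is where a refined area-law input is indispensable.
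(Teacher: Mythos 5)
The paper does not actually prove this statement: Theorem~\ref{t:sota} is imported verbatim from Ref.~\cite{KAA21}, and the only ingredient of that work reproduced here is Lemma~\ref{l}, the Schmidt-tail bound for the purification $|\Psi\rangle$. So there is no in-paper proof to compare against, and your proposal has to be judged against the known argument of \cite{KAA21}. At the level of scaling your plan has the right shape --- per-cut rank control, a union bound over $N-1$ cuts producing the $\log(N/\varepsilon)$, Chebyshev-type degree $\tilde O(\sqrt{\beta\log(1/\varepsilon')})$ for $e^{-\beta x}$, and an improved thermal area law supplying the $\beta^{2/3}$ --- but two of your steps have genuine gaps.

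First, the telescoping reduction. A rank-$D$ truncation of an \emph{operator} Schmidt decomposition is the optimal rank-$D$ approximation in Hilbert--Schmidt norm, not in trace norm, and the standard ``sequential truncation errors add up'' lemma is a statement about pure states in $2$-norm. Applying it cut by cut to $e^{-\beta H}$ and claiming the trace-norm errors telescope to $(N-1)\varepsilon'$ is not justified as written; converting Hilbert--Schmidt control to trace-norm control costs a factor of the square root of the rank, which you would need to track. The clean route --- and the one \cite{KAA21} and the present paper both take --- is to purify: write $|\Psi\rangle\propto(e^{-\beta H/2}\otimes I)\bigotimes_i|\psi\rangle_i$, bound the Schmidt tails of this \emph{state} across every cut (Lemma~\ref{l}), truncate it as an MPS with global $2$-norm error $O(\sqrt{N\delta})$, and trace out the ancillas; the trace norm of the reduced operators is then controlled by the $2$-norm of the states. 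This is exactly where the $N$ inside the logarithm comes from. Second, your single-cut estimate leans on quasi-locality of $\tilde h(s)=e^{s(H_L+H_R)}\,h\,e^{-s(H_L+H_R)}$, but imaginary-time conjugation is non-unitary and does not obey a Lieb--Robinson bound with bounded velocity; the operator norm of $\tilde h(s)$ can grow, and taming this is precisely the hard analytic content of the result. Deferring it to ``a refined area-law input'' is circular, since the $\beta^{2/3}$ area law of \cite{KAA21} is itself a consequence of the same polynomial-approximation machinery rather than an independent black box you can feed in.
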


\begin{remark}
See Refs. \cite{Has06, KGK+14, MSVC15} for analogues of this theorem in two and higher spatial dimensions.
\end{remark}

In practice, we may not have to increase the bond dimension with the system size $N$. An extreme example is the infinite imaginary time-evolving block decimation algorithm \cite{OV08}, which yields a translationally invariant matrix product representation of $\sigma_\beta$ directly in the thermodynamic limit. It is empirically observed that a constant bond dimension is sufficient for computing expectation values of local observables. This observation cannot be explained by Theorem \ref{t:sota}, where the bond dimension (\ref{eq:bd}) grows with the system size $N$ and diverges in the thermodynamic limit $N\to+\infty$.

In this paper, we prove that there exists an MPO with bond dimension
\begin{equation} \label{eq:main}
    e^{\tilde O\left(\beta^{2/3}+\sqrt{\beta\log\frac1\epsilon}\right)}
\end{equation}
such that all local properties of $\sigma_\beta$ are approximated to accuracy $\epsilon$. For constant $\beta$, the bond dimension (\ref{eq:main})
\begin{itemize}
\item is independent of $N$. This justifies the common practice of using a constant bond dimension in the numerical simulation of thermal properties.
\item grows slower than any power law in $1/\epsilon$, e.g., $\ll1/\epsilon^{0.001}$, as $\epsilon\to0^+$. This explains the empirical observation that high precision can be achieved with a small or moderate bond dimension.
\end{itemize}

As a side remark, Refs. \cite{Hua15aC, SV17, Hua19aA, DB19, Hua19QV} constructed locally accurate matrix product approximations to pure states with low entanglement.

\section{Results}

Consider a chain of $N$ spins with local dimension $d$.

\begin{definition} [matrix product operator]
Let $\{\hat O_j\}_{j=0}^{d^2-1}$ be a basis of the space of linear operators on the Hilbert space of a spin. Let $\{D_i\}_{i=0}^N$ with $D_0=D_N=1$ be a sequence of positive integers. An MPO has the form
\begin{equation} \label{eq:mpo}
\rho=\sum_{j_1,j_2,\ldots,j_N=0}^{d^2-1}\left(A_{j_1}^{(1)}A_{j_2}^{(2)}\cdots A_{j_N}^{(N)}\right)\hat O_{j_1}\otimes\hat O_{j_2}\otimes\cdots\otimes\hat O_{j_N},
\end{equation}
where $A_{j_i}^{(i)}$ is a matrix of size $D_{i-1}\times D_i$. Define $\max_{0\le i\le N}D_i$ as the bond dimension of the MPO $\rho$.
\end{definition}

Consider a local Hamiltonian
\begin{equation}
    H=\sum_{i=1}^{N-1}H_i,\quad\|H_i\|\le1,
\end{equation}
where $H_i$ represents the nearest-neighbor interaction between spins at positions $i,i+1$, and $\|\cdot\|$ is the operator norm. The thermal state $\sigma_\beta$ at inverse temperature $\beta$ is given by Eq. (\ref{eq:th}).

\begin{theorem} \label{t:1D}
There exists an MPO $\rho$ with bond dimension (\ref{eq:main}) such that
\begin{equation}
    |\tr(\rho\hat O)-\tr(\sigma_\beta\hat O)|\le\epsilon
\end{equation}
for any local observable $\hat O$ with $\|\hat O\|\le1$.
\end{theorem}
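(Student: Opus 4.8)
The plan is to exploit the only structural feature that distinguishes Theorem \ref{t:1D} from Theorem \ref{t:sota}: we need accuracy only for \emph{local} observables, so the bond dimension should be governed by a local length scale rather than by $N$. Any local observable $\hat O$ with $\|\hat O\|\le1$ is supported on a bounded interval $A$, so $|\tr(\rho\hat O)-\tr(\sigma_\beta\hat O)|\le\|\rho_A-(\sigma_\beta)_A\|_1$, where $\rho_A$ and $(\sigma_\beta)_A$ are the reduced operators on $A$. It therefore suffices to construct a single MPO $\rho$ of bond dimension (\ref{eq:main}) whose reduced operator on every bounded interval is trace-norm $\epsilon$-close to that of $\sigma_\beta$.

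First I would establish locality of temperature. For an interval $A$, let $W\supseteq A$ be a window of width $\ell$, and let $H_W=\sum_{i,\,i+1\in W}H_i$ be the Hamiltonian with all interactions outside $W$ deleted. At constant $\beta$, one-dimensional Gibbs states cluster exponentially (Araki), so the reduced state on $A$ is insensitive to the deleted interactions: $\|(\sigma_\beta)_A-(\sigma_\beta^{H_W})_A\|_1\le e^{-\Omega(\ell/\beta)}$, which is $\le\epsilon/2$ once $\ell=\tilde O(\beta\log(1/\epsilon))$. Applying Theorem \ref{t:sota} to the $\ell$-spin chain $W$ with internal accuracy $\epsilon'=\mathrm{poly}(\epsilon)$ yields an MPO for $\sigma_\beta^{H_W}$ of bond dimension $e^{\tilde O(\beta^{2/3}+\sqrt{\beta\log(\ell/\epsilon')})}$; because $\log\ell=\tilde O(\log(1/\epsilon))$, this is exactly (\ref{eq:main}). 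At this point we have an accurate bounded-bond-dimension description of every local marginal, with the crucial property that the descriptions on overlapping windows are mutually consistent, since each approximates the same $\sigma_\beta$ on the overlap.

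The hard part is \emph{gluing}: I must assemble these overlapping window descriptions into one global MPO of bond dimension still (\ref{eq:main}), getting the marginals right even for observables straddling a window boundary (a sharp tensor-product cut would destroy the cross-boundary correlations and fail for such observables). My plan is to use the approximate conditional independence of one-dimensional finite-temperature Gibbs states: conditioning on a buffer of width $\ell$ approximately decouples the two sides, which yields a Markov-type factorization of $\sigma_\beta$ as a product of window operators with the overlap marginals divided out. Substituting the Theorem \ref{t:sota} MPO for each window factor turns this into a single MPO; because each window overlaps only $O(1)$ neighbors, the global bond dimension is a bounded power of the per-window bond dimension and hence remains (\ref{eq:main}).

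I expect the main obstacle to be controlling this gluing quantitatively. The factorization involves (regularized) inverses of overlap marginals, whose small eigenvalues must be tamed so that the inverses neither blow up the error nor inflate the bond dimension, and one must check that the assembled operator stays approximately positive and normalized. The essential point that makes the bond dimension $N$-independent is that each local marginal of the glued MPO is affected only by the $O(1)$ seams in its neighborhood, so its error stays $O(\epsilon)$ no matter how many seams the full chain contains; collecting the truncation error ($\epsilon/2$), the per-window approximation, and this local gluing error by the triangle inequality then yields the claim.
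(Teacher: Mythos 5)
Your route is the local-windows-plus-gluing strategy of Ref.~\cite{AC21}, which the paper explicitly contrasts with its own proof, and the step you yourself flag as ``the hard part'' is not a technicality to be cleaned up later --- it is the theorem. The paper avoids gluing altogether: it purifies $\sigma_\beta$ by $|\Psi\rangle\propto(e^{-\beta H/2}\otimes I)\bigotimes_i|\psi\rangle_i$, invokes Eq.~(83) of Ref.~\cite{KAA21} to bound the Schmidt coefficients of $|\Psi\rangle$ across \emph{every} cut by $\sum_{j>Q_\delta}\lambda_j^2\le\delta$ with $Q_\delta=e^{\tilde O(\beta^{2/3}+\sqrt{\beta\log(1/\delta)})}$, applies the locally accurate truncation lemma for low-entanglement pure states (Lemma 4 of Ref.~\cite{Hua15aC}) to get $\tilde\rho$ with bond dimension $Q_\delta^2$ and local error $O(\sqrt\delta)$, and traces out the ancilla. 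No decay of correlations, no windows, no seams, no correlation length.

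Two concrete gaps in your plan. First, the locality-of-temperature estimate $\|(\sigma_\beta)_A-(\sigma_\beta^{H_W})_A\|_1\le e^{-\Omega(\ell/\beta)}$ is asserted rather than proved; the known one-dimensional bounds go through a correlation length whose dependence on $\beta$ is far worse than linear, and the paper stresses that, unlike Result 1 of Ref.~\cite{AC21}, its bound (\ref{eq:main}) does not involve the correlation length at all. For constant $\beta$ this only costs constants, but it already forfeits the stated $\beta$-dependence. Second, and decisively, the Markov-type factorization with divided-out overlap marginals does not deliver bond dimension (\ref{eq:main}): one-dimensional Gibbs states are only \emph{approximately} Markov, the operator $\sigma_{W_1}\sigma_{W_1\cap W_2}^{-1}\sigma_{W_2}$ is not Hermitian, and the (regularized) inverse of an overlap marginal on $\ell=\Theta(\beta\log(1/\epsilon))$ spins --- whose smallest eigenvalue is of order $d^{-\ell}e^{-O(\beta\ell)}$ --- has no evident MPO representation of bond dimension $e^{\tilde O(\sqrt{\beta\log(1/\epsilon)})}$; the natural cost of each seam is $d^{O(\ell)}=\mathrm{poly}(1/\epsilon)$ or worse. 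That is precisely the polynomial-in-$1/\epsilon$ barrier of Ref.~\cite{AC21} whose removal the paper identifies as the open problem it solves, so your proposal reduces the theorem to an unsolved step of the same difficulty rather than proving it.
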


\begin{proof}
We will purify $\sigma_\beta$. We introduce a second (auxiliary) copy of the system. Spins in the original and auxiliary systems are labeled by $1,2,\ldots,N$ and $\bar1,\bar2,\ldots,\bar N$, respectively. Let $\{|j\rangle\}_{j=0}^{d-1}$ be the computational basis of the Hilbert space of a spin, and
\begin{equation}
    |\Psi\rangle:=\frac{e^{-\beta H/2}\otimes I}{\sqrt Z}\bigotimes_{i=1}^N|\psi\rangle_i,\quad|\psi\rangle_i:=\sum_{j=0}^{d-1}|j\rangle_i|j\rangle_{\bar i},
\end{equation}
where $|\psi\rangle_i$ is an (unnormalized) maximally entangled state of spins $i$ and $\bar i$. By construction, $|\Psi\rangle$ is normalized and is a purification of $\sigma_\beta=\tr_a(|\Psi\rangle\langle\Psi|)$, where $\tr_a$ denotes the partial trace over the auxiliary system. Combining every pair of spins $i,\bar i$ into a composite spin of local dimension $d^2$, we obtain a chain of $N$ composite spins. Let $i|i+1$ be a cut separating the first $i$ and the last $N-i$ composite spins.

\begin{lemma} [Eq. (83) of Ref. \cite{KAA21}] \label{l}
Let $\lambda_1\ge\lambda_2\ge\cdots$ be the Schmidt coefficients of $|\Psi\rangle$ across the cut $i|i+1$ in non-ascending order. Then,
\begin{equation}
    \sum_{j>Q_\delta}\lambda_j^2\le\delta\quad\textnormal{for}\quad Q_\delta:=e^{\tilde O\left(\beta^{2/3}+\sqrt{\beta\log\frac1\delta}\right)}.
\end{equation}
\end{lemma}

Using this lemma and Lemma 4 in Ref. \cite{Hua15aC}, we obtain an MPO $\tilde\rho$ with bond dimension $Q_\delta^2$ such that 
\begin{equation} \label{eq:err}
    \big|\langle\Psi|\hat O|\Psi\rangle-\tr(\tilde\rho\hat O)\big|=O(\sqrt\delta)
\end{equation}
for any local observable $\hat O$ with $\|\hat O\|\le1$. As tracing out the auxiliary system does not increase the bond dimension, $\rho:=\tr_a\tilde\rho$ is an MPO in the original system with bond dimension $Q_\delta^2$. We complete the proof of Theorem \ref{t:1D} by letting $\epsilon$ be the right-hand side of Eq. (\ref{eq:err}).
\end{proof}

\begin{remark}
Recall that $\tilde\rho$ is a locally accurate approximation (\ref{eq:err}) to the purification $|\Psi\rangle$ of $\sigma_\beta$. Since (pure) MPS is more favorable than MPO in both theory and practice, one might prefer $\tilde\rho$ to be an MPS. This can be achieved by using Lemma 4 in Ref. \cite{Hua19aA} instead of Lemma 4 in Ref. \cite{Hua15aC} at the price of weakening the upper bound (\ref{eq:main}) on the bond dimension of $\rho$ to $e^{\tilde O(\beta^{2/3}+\sqrt{\beta\log(1/\epsilon)})}/\epsilon$.
\end{remark}

\section*{Notes}

Very recently, I became aware of related work by Alhambra and Cirac \cite{AC21}, which constructed locally accurate tensor network approximations to thermal states and time evolution in any spatial dimension. Specializing to thermal states in one dimension, my methods and results are different from theirs. Their proof consists of two steps:
\begin{enumerate}
    \item Construct ``local approximations'' assuming exponential decay of correlations.
    \item Merge local approximations using the ``averaging trick'' of Refs. \cite{SV17, DB19, Hua19aA}.
\end{enumerate}
The proof of Theorem \ref{t:1D} uses neither of these ingredients. Different from Eq. (5) in Result 1 of Ref. \cite{AC21}, the bond dimension (\ref{eq:main}) does not depend on the correlation length and grows slower than any power law in $1/\epsilon$ as $\epsilon\to0^+$ for constant $\beta$. This solves an open problem in the conclusion section of Ref. \cite{AC21}.

\section*{Conflict of interest}

The author declares that he has no conflict of interest.

\section*{Acknowledgments}

This work was supported by NSF grant PHY-1818914 and a Samsung Advanced Institute of Technology Global Research Partnership.

\printbibliography

\end{document}